\def\cG{{\cal G}}
\def\DEF{\stackrel{\rm def}{=}}
\newtheorem{theorem}{Theorem}
\newtheorem{lemma}{Lemma}
\newtheorem{definition}{Definition}
\newtheorem{corollary}[theorem]{Corollary}
\newtheorem{observation}[lemma]{Observation}
\newcommand{\set}[1]{\left\{ #1 \right\}}
\newcommand{\paren}[1]{\left( #1 \right)}
\newcommand{\inv}[1]{\frac{1}{#1}}
\newcommand{\ceil}[1]{\left\lceil {#1} \right\rceil}
\newcommand{\floor}[1]{\left\lfloor {#1} \right\rfloor}
\newcommand{\reals}{\mathbb{R}}
\newcommand{\argmin}{\operatornamewithlimits{argmin}}
\newcommand{\argmax}{\operatornamewithlimits{argmax}}
\newcommand{\eps}{\varepsilon}
\newcommand{\opt}{\textsc{opt}}
\newcommand{\appr}{\textsc{appr}}
\def\mvbp{\textsc{mvbp}}
\def\mmk{\textsc{mmk}}
\newcommand{\knapsack}{\textsc{knapsack}}
\newcommand{\bp}{\textsc{bp}}
\newcommand{\vbp}{\textsc{vbp}}
\newcommand{\setcover}{\textsc{sc}}
\newcommand{\ff}{\textsc{ff}}
\newcommand{\cA}{\mathcal{A}}
\newcommand{\cC}{\mathcal{C}}
\newcommand{\bl}{\bar{\ell}}
\renewcommand{\paragraph}[1]{\medskip\par\noindent\textbf{#1} }
\begin{document}

\title{\textbf{Vector Bin Packing with Multiple-Choice}%
\thanks{Research supported in part by the Next Generation Video 
(NeGeV) Consortium, Israel.} \\
\textsc{\Large Extended Abstract}
}

\author{
\begin{tabular}{c@{\extracolsep{15pt}}c}
Boaz Patt-Shamir%
\thanks{Supported in part by the Israel Science Foundation
 (grant 664/05).}
 & Dror Rawitz\\
{\small\tt boaz@eng.tau.ac.il}& {\small\tt rawitz@eng.tau.ac.il}\\
\multicolumn{2}{c}{School of Electrical Engineering}\\
\multicolumn{2}{c}{Tel Aviv University}\\
\multicolumn{2}{c}{Tel Aviv~~69978}\\
\multicolumn{2}{c}{Israel}\\
\end{tabular}
}

\date{}

\begin{titlepage}
\maketitle

\begin{abstract}
We consider a variant of \emph{bin packing} called
\emph{multiple-choice vector bin packing}.  In this problem we are
given a set of items, where each item can be selected in one of
several $D$-dimensional \emph{incarnations}.  We are also given $T$
bin types, each with its own cost and $D$-dimensional size.  Our goal
is to pack the items in a set of bins of minimum overall cost.
The problem is motivated by scheduling in networks with guaranteed
quality of service (QoS), but due to its general formulation it has
many other applications as well.
We present an approximation algorithm that is guaranteed to produce a
solution whose cost is about $\ln D$ times the optimum.  For the
running time to be polynomial we require $D=O(1)$ and $T=O(\log n)$.
This extends previous results for \emph{vector bin packing}, in which
each item has a single incarnation and there is only one bin type.
To obtain our result we also present a PTAS for the multiple-choice
version of \emph{multidimensional knapsack}, where we are given only
one bin and the goal is to pack a maximum weight set of (incarnations
of) items in that bin.
\end{abstract}

\bigskip
\noindent
\textbf{Keywords:}
Approximation Algorithms,
Multiple-Choice Vector Bin Packing,
Multiple-Choice Multidimensional Knapsack.

\renewcommand{\thepage}{}
\end{titlepage}
\pagenumbering{arabic}

\section{Introduction}

Bin packing, where one needs to pack a given set of items using the
least number of limited-space containers (called bins), is one of the
fundamental problems of combinatorial optimization (see,
e.g.,~\cite{PapadimitriouS}).  In the \emph{multidimensional} flavor
of bin packing, each item has sizes in several dimensions, and the
bins have limited size in each dimension~\cite{knapsack-book}.  In
this paper we consider a natural generalization of multidimensional
bin packing that occurs frequently in practice, namely
\emph{multiple-choice} multidimensional bin packing.  In this variant,
items and space are multidimensional, and in addition, each item may
be selected in one of a few \emph{incarnations}, each with possibly
different sizes in the different dimensions. Similarly, bins can be
selected from a set of types, each bin type with its own size cap in
each dimension, and possibly different cost.  The problem is to select
incarnations of the items and to assign them to bins so that the
overall cost of bins is minimized.

Multidimensionality models the case where the objects to pack have
costs in several incomparable budgets.  For example, consider a
distribution network (e.g., a cable-TV operator), which needs to
decide which data streams to provide.  Streams typically have
prescribed bandwidth requirements, monetary costs, processing
requirements etc., while the system typically has limited available
bandwidth, a bound on the amount of funds dedicated to buying content,
bounded processing power etc.
The multiple-choice variant models, for example, the case where
digital objects (such as video streams) may be taken in one of a
variety of formats with different characteristics (e.g., bandwidth and
processing requirements), and similarly, digital bins (e.g., server
racks) may be configured in more than one way.
The multiple-choice multidimensional variant is useful in many
scheduling applications such as communication under Quality of Service
(QoS) constraints, and including workplans for nursing personnel in
hospitals~\cite{WP-72}.

Specifically, in this paper we consider the problem of
\emph{multiple-choice vector bin packing} (abbreviated \mvbp, see
Section~\ref{Sec:definitions} for a formal definition).  The input to
the problem is a set of $n$ \emph{items} and a set of $T$ \emph{bin
types}.  Each item is represented by at most $m$ \emph{incarnations},
where each incarnation is characterized by a $D$-dimensional vector
representing the size of the incarnation in each dimension.  Each bin
type is also characterized by a $D$-dimensional vector representing
the capacity of that bin type in each dimension.  We are required to
pack all items in the minimal possible number of bins, i.e., we need
to select an incarnation for each item, select a number of required
bins from each type, and give an assignment of item incarnations to
bins so that no bin exceeds its capacity in any dimension.
In the weighted version of this problem each bin type has an associated
cost, and the goal is to pack item incarnations into a set of bins of
minimum cost.

Before stating our results, we note that na\"{\i}ve reductions to the
single-choice model do not work. For example, consider the case where
$n/2$ items can be packed together in a single type-1 bin but require
$n/2$ type-2 bins, while the other $n/2$ items fit together in a
single type-2 bin but require $n/2$ type-1 bins. If one uses only one
bin type, the cost is dramatically larger than the optimum---even with
one incarnation per item.
Regarding the choice of item incarnation, one may try to use only a
cost-effective incarnation for each item (using some natural
definition).  However, it is not difficult to see that this approach
results in approximation ratio $\Omega(D)$ even when there is only one
bin type.

\subsection{Our Results}
In this paper we give a polynomial-time approximation algorithm for
the multiple-choice vector bin packing problem, in the case where $D$
(the number of dimensions) is constant.  The approximation ratio for
the general weighted version is $\ln 2D+3$, assuming that $T$ (the
number of bin types) satisfies $T=O(\log n)$.  For the unweighted
case, the approximation ratio can be improved to $\ln2D+1+\eps$, for
any constant $\eps>0$, if $T=O(1)$ as well.  Without any assumption on
$T$, we can guarantee, in the unweighted case, cost of
$(\ln(2D)+1)\opt+T+1$, where $\opt$ denotes the optimal cost.
To the best of our knowledge, this is the first 
approximation algorithm for the problem with multiple choice, and it
is as good as the best solution for single-choice vector bin packing
(see below).%

As an aside, to facilitate our algorithm we also improve on the best
results for multiple-choice multidimensional \emph{knapsack} problem
(abbreviated \mmk), where we are given a single bin and the goal is to
load it with the maximum weight set of (incarnations of) items.
Specifically, we present a polynomial-time approximation scheme (PTAS)
for \mmk\ for the case where the dimension $D$ is constant.  The
PTAS
for \mmk\ is used as a subroutine in our algorithm for \mvbp.

\subsection{Related Work}
Classical bin packing (\bp) (single dimension, single choice) admits
an asymptotic PTAS~\cite{FerLue81} and an asymptotic fully
polynomial-time approximation scheme (asymptotic
FPTAS)~\cite{KarKar82}.
Friesen and Langston~\cite{FriLan86} presented constant factor
approximation algorithms for a more general version of \bp\ in which a
fixed collection of bin sizes is allowed, and the cost of a solution
is the sum of sizes of used bins.  For more details about this version
of \bp\ see~\cite{SvSE03} and references therein.
Correa and Epstein~\cite{CorEps08} considered \bp\ with controllable
item sizes.  In this version of \bp\ each item has a list of pairs
associated with it.  Each pair consists of an allowed size for this
item, and a nonnegative penalty.  The goal is to select a pair for
each item so that the number of bins needed to pack the sizes plus the
sum of penalties is minimized.  Correa and Epstein~\cite{CorEps08}
presented an asymptotic PTAS that uses bins of size slightly larger
than $1$.

Regarding multidimensionality, it has been long known that vector bin
packing (\vbp, for short) can be approximated to within a factor of
$O(D)$~\cite{GGJY76,FerLue81}.  More recently, Chekuri and
Khanna~\cite{CheKha04} presented an $O(\log D)$-approximation
algorithm for \vbp, for the case where $D$ is constant.  They also
showed that approximating \vbp\ for arbitrary dimension is as hard as
graph coloring, implying that it is unlikely that \vbp\ admits
approximation factor smaller than $\sqrt{D}$.  The best known
approximation ratio for \vbp\ is due to Bansal, Caprara and
Sviridenko~\cite{BCS06}, who gave a polynomial-time approximation
algorithm for constant dimension $D$ with approximation ratio
arbitrarily close to $\ln D + 1$.
Our algorithm for \mvbp\ is based on their ideas.

For the knapsack problem, Frieze and Clarke~\cite{FriCla84} presented
a PTAS for the (single-choice) multidimensional variant, but obtaining
an FPTAS for multidimensional knapsack is NP-hard~\cite{MC-84}.
Shachnai and Tamir~\cite{ShaTam03} use the approach of~\cite{FriCla84}
to obtain a PTAS for a special case of 2-dimensional multiple-choice
knapsack.  Our algorithm for \mmk\ extends their technique to the
general case.
\mmk\ was studied extensively by practitioners.  Heuristics for \mmk\ 
abound, see, e.g.,~\cite{Khan,HMS-04,PD-05,ARKMS06,S-07}.  From the
algorithmic viewpoint, the first relevant result is by Chandra et
al.~\cite{CHW76}, who present a PTAS for single-dimension,
multiple-choice knapsack.

\subsection{Paper Organization}
The remainder of this paper is organized as follows. 
In Section~\ref{Sec:definitions} we formalize the problems. In
Section~\ref{Sec:knapsack} we present our solution to the \mmk\
problem, which is used in our solution to the \mvbp\ problem that is
presented in Section~\ref{Sec:mvbp}.

\section{Problem Statements}
\label{Sec:definitions}

We now formalize the optimization problems we deal with.  For a natural
number $n$, let $[n]\DEF\set{1,2,\ldots,n}$ (we use this notation
throughout the paper).

\paragraph{Multiple-Choice Multidimensional Knapsack problem} (\mmk).
\begin{description*}
\item[\textit{Instance:}] A set of $n$ \emph{items}, where each item
      is a set of $m$ or fewer $D$-dimensional \emph{incarnations}.
      Incarnation $j$ of item $i$ has size $a_{ij}\in(\reals^+)^D$, in
      which the $d$th dimension is a real number $a_{ijd}\ge 0$. \\
      In the \emph{weighted} version, each incarnation $j$ of item $i$
      has weight $w_{ij} \geq 0$.
\item[\textit{Solution:}] 
      A set of incarnations, at most one of each item, such that the
      total size of the incarnations in each dimension $d$ is at most
      $1$.
\item[\textit{Goal:}] Maximize the number (\emph{weighted version:}
      total weight) of incarnations in a solution.
\end{description*}
When $D=m=1$, this is the classical Knapsack problem (\knapsack).

\bigskip

\paragraph{Multiple-Choice Vector Bin Packing} (\mvbp).
\begin{description*}
\item[\textit{Instance:}] Same as for unweighted \mmk, with the addition 
      of $T$ \emph{bin types}, where each bin type $t$ is
      characterized by a vector $b_t \in (\reals^+)^D$. The $d$th
      coordinate of $b_t$ is called the \emph{capacity} of type $t$ in
      dimension $d$, and denoted by $b_{td}$.  \\
In the \emph{weighted}
      version, each bin type $t$ has a weight $w_t\ge0$.
\item[\textit{Solution:}] A set of bins, each assigned a bin type and
  a set of item incarnations, such that exactly one
      incarnation of each item is assigned to any bin, and such that the total
      size of incarnations assigned to a bin does not exceed its
      capacity in any dimension
\item[\textit{Goal:}] Minimize number of (\emph{weighted version:}
      total weight of) assigned bins. 
\end{description*}
When $m=1$ we get \vbp, and the special case where $D=m=1$ is the
classical bin packing problem (\bp).

\section{Multiple-Choice Multidimensional Knapsack}
\label{Sec:knapsack}

In this section we present a PTAS for weighted \mmk\ for the case
where $D$ is a constant.  Our construction extends the algorithms of
Frieze and Clarke~\cite{FriCla84} and of Shachnai and
Tamir~\cite{ShaTam03}.

We first present a linear program of \mmk, where the variables
$x_{ij}$ indicate whether the $j$th incarnation of the $i$th item is
selected.
\begin{align}
\begin{array}{ll@{\hspace{10pt}}l}
\max        
& \displaystyle 
  \sum_{i=1}^n \sum_{j=1}^m w_{ij} x_{ij} \\[12pt]
\text{s.t.} 
& \displaystyle 
  \sum_{i=1}^n \sum_{j=1}^m a_{ijd} x_{ij} \leq 1 &\forall d \in [D] \\[12pt]
& \displaystyle
  \sum_{j=1}^m x_{ij} \leq 1                  & \forall i \in [n] \\[12pt]
& x_{ij} \geq 0                               & \forall i \in [n], j \in [m]
\end{array}
\tag{MMK}
\label{LP:MKA}
\end{align}
In the program, the first type of constraints make sure that the load on
the knapsack in each dimension is bounded by $1$; the second type of
constraints ensures that at most one copy of each element is taken
into the solution. Constraints of the third type indicate the
relaxation: the integer program for \mmk\ requires that $x_{ij} \in
\set{0,1}$.

Our PTAS for \mmk\ is based on the linear program~\eqref{LP:MKA}.  Let
$\eps>0$. Suppose we somehow guess the heaviest $q$ incarnations that
are packed in the knapsack by some optimal solution, for $q = \min
\set{n,\ceil{D/\eps}}$.  Formally, assume we are given a set $G
\subseteq [n]$ of at most $q$ items and a function $g: G \to [m]$
that selects incarnations of items in $G$.  In this case we can assign
values to some variables of \eqref{LP:MKA} as follows:
\[
x_{ij} =
\begin{cases}
1~, & \mbox{if } i \in G \mbox{ and } j = g(i) \\
0~, & \mbox{if } i \in G\mbox{ and } j \neq g(i) \\
0~, & \mbox{if } i \not\in G\mbox{ and } w_{ij} > \min\{w_{\ell
  g(\ell)}\mid {\ell \in G} \} \\
\end{cases}
\]
That is, if we guess that incarnation $j$ of item $i$ is in the
optimal solution, then $x_{ij} = 1$ and $x_{ij'} = 0$ for $j'\ne j$;
also, if the $j$th incarnation of item $i$ weighs more than some
incarnation in our guess, then $x_{ij} = 0$ .  Denote the resulting
linear program \ref{LP:MKA}$(G,g)$.

Let $x^*(G,g)$ be an optimal (fractional) solution
of~\ref{LP:MKA}$(G,g)$. The idea of Algorithm~\ref{Alg:MKA} below is
to simply round down the values of $x^*(G,g)$. We show that if $G$ and
$g$ are indeed the heaviest incarnations in the knapsack, then the
rounded-down solution is very close to the optimum.  Therefore, in the
algorithm we loop over all possible assignments of $G$ and $g$ and
output the best solution.

\begin{algorithm}[h]
\begin{small}
\caption{}
\label{Alg:MKA}
\begin{algorithmic}[1]
\FORALL{$G \subseteq [n]$ such that $|G| \leq q$
        and $g: G \to [m]$}
  \STATE $b_d(G,g) \leftarrow 1 - \sum_{i \in G} a_{i g(i) d}$ 
         for every $d \in [D]$
  \IF{$b_d(G,g) \geq 0$ for every $d$}
    \STATE Compute an optimal basic solution $x^*(G,g)$ of~\ref{LP:MKA}$(G,g)$
    \STATE $x_{ij}(G,g) \leftarrow \floor{x^*_{ij}(G,g)}$ 
           for every $i$ and $j$
  \ENDIF
  \STATE $x \leftarrow \argmax_{x(G,g)} w \cdot x(G,g)$
\ENDFOR
\RETURN $x$
\end{algorithmic}
\end{small}
\end{algorithm}

\begin{theorem}
\label{The:MKA}
If $D = O(1)$, then Algorithm~\ref{Alg:MKA} is a PTAS for \mmk.
\end{theorem}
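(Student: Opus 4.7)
My plan is to verify that the algorithm runs in polynomial time for $D=O(1)$ and then to establish a $(1-O(\eps))$-approximation guarantee. The running time is the easy part: the outer loop enumerates at most $\binom{n}{q}m^q \leq (nm)^q$ pairs $(G,g)$, which is polynomial since $q=\ceil{D/\eps}$ and $D=O(1)$; each iteration solves an LP of size $O(nm)$, again polynomial.

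For the approximation guarantee I plan to fix an optimal integral solution $S^*$ of value $\opt$ and to focus on the iteration in which the guess $(G,g)$ ``matches'' $S^*$ in a particular way. Let $G^*\subseteq[n]$ be the (up to) $q$ items whose incarnations used by $S^*$ carry the largest weights, and let $g^*$ assign each $i\in G^*$ to the incarnation selected by $S^*$. In the iteration $(G,g)=(G^*,g^*)$, the fixed variables of \ref{LP:MKA}$(G^*,g^*)$ agree with $S^*$; variables forced to zero correspond to incarnations strictly heavier than every incarnation in $G^*$, none of which $S^*$ uses by the very choice of $G^*$. Consequently $S^*$ induces a feasible integer point of \ref{LP:MKA}$(G^*,g^*)$, so $w\cdot x^*(G^*,g^*)\geq \opt$. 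The case $|S^*|\leq q$ is trivial (then $|G^*|=|S^*|$, the fixed variables already sum to $\opt$, and rounding down preserves this value), so I focus on $|G^*|=q$.

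The heart of the argument is bounding the rounding loss. Let $F=\set{(i,j):0<x^*_{ij}(G^*,g^*)<1}$; the loss equals $\sum_{(i,j)\in F} w_{ij}\,x^*_{ij}(G^*,g^*)$. I will combine two facts: (i) every $(i,j)\in F$ satisfies $i\notin G^*$ and $w_{ij}\leq w_{\min}:=\min_{\ell\in G^*} w_{\ell g^*(\ell)}$ directly by the construction of \ref{LP:MKA}$(G^*,g^*)$; (ii) $w_{\min}\leq \opt/q$, because the items in $G^*$ contribute at least $q\cdot w_{\min}$ to $\opt$. Given these, it suffices to prove $\abs{F}=O(D)$.

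The main obstacle is this $\abs{F}=O(D)$ bound, which is a standard basic-solution counting argument but requires care. In an optimal basic solution of \ref{LP:MKA}$(G^*,g^*)$ the number of positive variables is at most the number of tight nontrivial constraints, namely at most $D$ tight knapsack constraints plus the tight ``at most one incarnation per item'' equalities. Classifying each item $i\notin G^*$ by the number $p_i$ of positive $x^*_{ij}$ and by $t_i\in\set{0,1}$ indicating whether its item constraint is tight, I expect the counting to give $\sum_i (p_i-t_i)\leq D$; each fractional variable is then attributable to either a partially taken item ($t_i=1$, $p_i\geq 2$) or a positively-opened but non-tight item ($t_i=0$), yielding $\abs{F}\leq 2D$. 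Combined with (i) and (ii), the rounding loss is at most $2D\cdot w_{\min}\leq 2D\,\opt/q\leq 2\eps\,\opt$, so the algorithm returns a solution of value at least $(1-2\eps)\opt$. Since $\eps>0$ is arbitrary, this is a PTAS; the constant $2$ can be absorbed by rescaling $\eps$ or tightening the fractional-variable count.
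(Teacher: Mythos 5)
Your proposal is correct and follows essentially the same route as the paper: guess the $q$ heaviest incarnations of an optimal solution, solve the restricted LP, and use the sparsity of a basic optimal solution to bound the rounding-down loss by roughly $D\cdot w_{\min}\leq D\cdot\opt/q$. The only difference is bookkeeping---your tight-constraint count at a vertex gives $\abs{F}\leq 2D$ fractional variables (versus the paper's per-item count of at most $D$ affected items, stated there somewhat imprecisely via slack variables), and you correctly absorb the factor $2$ by rescaling $\eps$.
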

\begin{proof}
Regarding running time, note that there are $O(n^q)$ choices of $G$,
and  $O(m^q)$ choices of $g$ for each choice of $G$, and  hence, the
algorithm runs for $O((nm)^q) = O((nm)^{\ceil{D/\eps}})$ iterations,
i.e., time polynomial in the input length, for constant $D$ and $\eps$.

Regarding approximation, fix be an optimal integral solution $x^I$
to~\eqref{LP:MKA}.  If $x^I$ assigns at most $q$ incarnations of items
to the knapsack, then we are 
done.
Otherwise, let $G^I$ be the set of items that correspond to the $q$
heaviest incarnations selected to the knapsack by $x^I$.  For $i \in
G^I$, let $g^I(i)$ denote the incarnation of $i$ that was put in the
knapsack by $x^I$.  Consider the iteration of Algorithm~\ref{Alg:MKA}
in which $G=G^I$ and $g=g^I$.  Clearly, $w \cdot x^I \leq w \cdot
x^*(G^I,g^I)$.
Let $n'$ denote the number of items that were not chosen by
$(G^I,g^I)$ or were eliminated because their incarnations weigh too
much.  Let $k$ be the number of variables of the form $x_{ij}$ in
\ref{LP:MKA}$(G^I,g^I)$.  When using slack form, we add $D + n'$ slack
variables: each constraint of the first type is written as
$\sum_{i=1}^n \sum_{j=1}^m a_{ijd} x_{ij} +s_d= 1$, for $1\le d\le D$,
and each constraint of the second type is written as $\sum_{j=1}^m
x_{ij}+s'_{i}= 1$ for $1\le i\le n'$. Thus, the total number of
variables in the program to $k+ D + n'$,and since
\ref{LP:MKA}$(G^I,g^I)$ has $D+n'$ constraints (excluding positivity
constraints $x_{ij} \geq 0$), it follows any basic solution
of~\ref{LP:MKA}$(G^I,g^I)$ has at most $D+n'$ positive variables.
Since by constraints of the second type in~\ref{LP:MKA}$(G^I,g^I)$
there is at least one positive variable for each item, it follows that
$x^*(G^I,g^I)$ has at most $D$ non-integral entries, and therefore, by
rounding down $x^*(G^I,g^I)$ we lose at most $D$ incarnations of
items. Let $W^I = \sum_{i \in G^I} w_{i g^I(i)}$.  Then each
incarnation lost due to rounding weighs at most $W^I/q$ (because it is
not one of the $q$ heaviest).  We conclude that
\[
\textstyle
w \cdot x(G^I,g^I) 
~\geq~ w \cdot x^*(G^I,g^I) - D \cdot \frac{W^I}{q} %
~\geq~ w \cdot x^*(G^I,g^I) (1 - \frac{D}{q}) %
~\geq~ w \cdot x^I (1 - \frac{D}{q}) %
~=~    \frac{\opt}{1 + \eps}
~,
\]
and we are done.
\end{proof}

\section{Multiple-Choice Vector Bin Packing}
\label{Sec:mvbp}

In this section we present our main result, namely, an $O(\log
D)$-approximation algorithm for \mvbp, assuming that $D$ and $T$
(number of dimensions and bin types, respectively) are constants.  Our
algorithm is based on and extends the work of~\cite{BCS06}.

The general idea is as follows.  We first encode \mvbp\ using a
covering linear programming formulation with exponentially many
variables, but polynomially many constraints.  We find a near optimal
fractional solution to this (implicit) program using a separation
oracle of the dual program.  (The oracle is implemented by the \mmk\
algorithm from Section~\ref{Sec:knapsack}.)
We assign some incarnations to bins using a greedy rule based on some
``well behaved'' dual solution (the number of greedy assignments
depends on the value of the solution to the primal program).  Then we
are left with a set of unassigned items, but due to our greedy rule we
can assign these remaining items to a relatively small number of bins.

\subsection{Covering Formulation}

We start with the transformation of \mvbp\ to weighted Set Cover
(\setcover).  An instance of \setcover\ is a family of sets $\cC =
\set{C_1,C_2,\ldots}$ and a cost $w_C\ge0$ for each $C\in\cC$. We
call $\bigcup_{C \in \cC} C$ the \emph{ground set} of the instance,
and usually denote it by $I$.  The goal in \setcover\ is to choose
sets from $\cC$ whose union is $I$ and whose overall cost is minimal.
Clearly, \setcover\ is equivalent to the following integer program:
\begin{align}
\begin{array}{ll@{\hspace{10pt}}l}
\min 
& \displaystyle 
  \sum_{C \in \cC} w_C \cdot x_C          \\[12pt]
\text{s.t.}     
& \displaystyle
  \sum_{C \ni i} x_C \geq 1      & \forall i \in I \\[12pt]
& x_C \in \set{0,1}              & \forall C \in \cC
\end{array}
\tag{P}
\label{IP:cover}
\end{align}
where $x_C$ indicates whether the set $C$ is in the cover.  A linear
program relaxation is obtained by replacing the integrality
constraints of \eqref{IP:cover} by positivity constraints $x_C \geq 0$
for every $C \in \cC$.  The above formulation is very general. We
shall henceforth call problems whose instances can be formulated as in
\eqref{IP:cover} for some $\cC$ and $w_C$ values, \emph{(P)-problems}.

In particular, \mvbp\ is  a \eqref{IP:cover}-problem, as the
following reduction shows.  Let $\cal I$ be an instance of \mvbp.
Construct an instance $\cC$ of \setcover\ as follows.  The ground set
of $\cC$ is the set of items in $\cal I$, and sets in $\cC$ are the
subsets of items that can be assigned to some bin.
Formally, a set $C$ of items is called \emph{compatible} if and
only if there exists a bin type $t$ and an incarnation mapping $f:C
\to [m]$ such that $\sum_{i \in C} a_{if(i)d} \leq b_{td}$ for every
dimension $d$, i.e., if there is a way to accommodate all members of
$C$ is the same bin.  In the instance of \setcover, we let $\cC$ be
the collection of all compatible item sets.  Note that a solution to
set cover does not immediately solve \mvbp, because selecting
incarnations and bin-types is an NP-hard problem in its own right.  To
deal with this issue we have one variable for each possible
\emph{assignment} of incarnations and bin type.  Namely, we may have
more than one variable for a compatible item subset.
\subsection{Dual Oblivious Algorithms}
\label{Sub:oblivious}

We shall be concerned with approximation algorithms for
\eqref{IP:cover}-problems which have a special property with respect
to the dual program. First, we define the dual to the LP-relaxation
of~\eqref{IP:cover}:
\begin{align}
\begin{array}{ll@{\hspace{10pt}}l}
\max           & \displaystyle
                 \sum_{i \in I} y_i        \\[12pt]
\mbox{s.t.}    & \displaystyle
                 \sum_{i \in C} y_i \leq w_C & \forall C \in \cC \\[12pt]
               & y_i \geq 0                  & \forall i \in I
\end{array}
\tag{D}
\label{LP:D}
\end{align}
Next, for an instance $\cC$ of set cover and a set $S$, we define the
\emph{restriction of $\cC$ to $S$} by $\cC|_S \DEF \set{C\cap S \mid C
\in \cC}$, namely we project out all elements not in $S$. Note that
for any $S$, a solution to $\cC$ is also a solution to $\cC|_S$: we
may only discard some of the constraints in \eqref{IP:cover}.  We now
arrive at our central concept.

\begin{definition}[Dual Obliviousness]
\label{Def:dual}
Let $\Pi$ be a \eqref{IP:cover}-problem.  An algorithm $A$ for $\Pi$
is called 
\emph{$\rho$-dual oblivious} if there exists a constant $\delta$ such
that for every instance $\cC\in\Pi$ there exists a dual solution $y
\in \reals^n$ to \eqref{LP:D} satisfying, for all $S
\subseteq I$, that
\[
A(\cC|_S) \leq \rho \cdot \sum_{i \in S} y_i + \delta~.
\]
\end{definition}

Let us show that the First-Fit (\ff) heuristic for \bp\ is dual
oblivious (we use this property later).  In \ff, the algorithm scans
the items in arbitrary order and places each item in the left most bin
which has enough space to accommodate it, possibly opening a new bin
if necessary.  A newly open bin is placed to the right of rightmost
open bin.

\begin{observation}
\label{obs:ff}
First-Fit is a $2$-dual oblivious  algorithm for bin packing.
\end{observation}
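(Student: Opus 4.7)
The plan is to exhibit an explicit feasible dual solution and then re-derive the classical $2$-factor bound of First-Fit in terms of that solution, restricted to an arbitrary subset $S$ of the items.

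First, I would unpack the (P)-formulation of \bp: the ground set is the set of items $I$, each $C\in\cC$ is a set of items whose sizes sum to at most $1$, and every $w_C$ equals $1$ (one bin of unit cost). The dual \eqref{LP:D} therefore asks for $y_i\ge 0$ with $\sum_{i\in C}y_i\le 1$ for every $C\in\cC$. The natural candidate is to set $y_i \DEF a_i$, the size of item $i$. Feasibility is immediate: for any set $C$ of items that fits in one bin, $\sum_{i\in C}a_i\le 1$ by the definition of compatibility, so $\sum_{i\in C}y_i\le 1 = w_C$. Crucially, this $y$ depends only on the item sizes, not on $S$, which is exactly what Definition~\ref{Def:dual} requires.

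Next, I would fix an arbitrary $S\subseteq I$ and run \ff\ on $S$; note that $\cC|_S$ is simply the collection of subsets of $S$ that fit in one bin, so \ff\ on $S$ is the algorithm being analyzed. Let $k$ denote the number of bins opened. The key structural property I would use is the standard First-Fit observation: for any two bins $B_j,B_{j+1}$ opened consecutively, the combined fill satisfies $\sum_{i\in B_j}a_i + \sum_{i\in B_{j+1}}a_i > 1$, because otherwise the first item placed in $B_{j+1}$ would have fit into $B_j$ and would have been placed there. Summing this inequality over the disjoint pairs $(B_1,B_2),(B_3,B_4),\ldots$ yields $\sum_{i\in S}a_i > \lfloor k/2\rfloor$, and hence $k \le 2\sum_{i\in S}a_i + 1 = 2\sum_{i\in S}y_i + 1$.

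This establishes dual obliviousness with $\rho = 2$ and $\delta = 1$, finishing the proof. I do not expect a real obstacle here: the only point worth double-checking is that the dual vector $y$ can be chosen once per instance independently of $S$ (which is why picking $y_i=a_i$, a function of the item alone, is important) and that the First-Fit ``two consecutive bins exceed one'' argument carries over verbatim when the algorithm is applied to a subset $S$ of the original items.
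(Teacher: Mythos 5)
Your proposal is correct and takes essentially the same approach as the paper: you choose the identical dual solution $y_i = a_i$ (feasible since any compatible set has total size at most $1$) and derive $\ff(S) \le 2\sum_{i\in S} y_i + 1$ for every $S\subseteq I$, giving $\rho=2$ and $\delta=1$. The only cosmetic difference is how the factor $2$ is obtained---you pair consecutive bins so that each pair's combined fill exceeds $1$, while the paper invokes the equivalent fact that at most one open bin is at most half-full throughout the execution.
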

\begin{proof}
In any solution produced by \ff, all non-empty bins except perhaps one
are more than half-full.  Furthermore, this property holds throughout
the execution of \ff, and regardless of the order in which items are
scanned.  It follows that if we let $y_i = a_i$, where $a_i$ is the
size of the $i$th item, then for every $S \subseteq I$ we have $\ff(S)
\leq \max \{2 \sum_{i \in S} y_i, 1\} \leq 2 \sum_{i \in S} y_i +1$,
and hence \ff\ is dual oblivious for \bp\ with $\rho=2$ and
$\delta=1$.
\end{proof}

The usefulness of dual obliviousness is expressed in the following
result.  Let $\Pi$ be a \eqref{IP:cover}-problem, and suppose that
\appr\ is a $\rho$-dual oblivious algorithm for $\Pi$.  Suppose
further that we can efficiently find the dual solution $y$ promised by
dual obliviousness.
Under these assumptions, Algorithm~\ref{Alg:MVBP} below solves any
instance $\cC$ of $\Pi$.

\begin{algorithm}[ht]
\begin{small}
\caption{}
\label{Alg:MVBP}
\begin{algorithmic}[1]
\STATE (\emph{Linear Programming})
       Find an optimal solution $x^*$ to~\eqref{IP:cover}.  Let
       $\opt^*$ denote its value.
\STATE (\emph{greedy phase}) 
        Let $\cC^+ = \set{C ~:~ x^*_C>0}$. Let $\cG\gets\emptyset, S\gets I$.
\WHILE{$\displaystyle\sum_{C\in\cG}w_c<\ln \rho \cdot  \opt^*$}
  \STATE Find $C \in \cC^+$ for which  
         $\displaystyle \inv{w_C} \sum_{i \in S \setminus C} y_i$
         is maximized;
  \STATE $\cG \gets \cG\cup\set{C}$, $S \gets S\setminus C$.
\ENDWHILE
\STATE (\emph{residual solution})
      Apply $\appr$ to the residual instance $S$, obtaining solution
      $\cA$.
\RETURN $\cG \cup \cA$.
\end{algorithmic}
\end{small}
\end{algorithm}

We now bound the weight of the solution $\cG \cup \cA$ that is
computed by Algorithm~\ref{Alg:MVBP}.

\begin{theorem}
\label{The:rna}
Let $\Pi$ be a \eqref{IP:cover}-problem.  Then for any instance of
$\Pi$ with optimal fractional solution $\opt^*$,
Algorithm~\ref{Alg:MVBP} outputs $\cG \cup \cA$ satisfying
\[
w(\cG \cup \cA)
\leq (\ln \rho + 1) \opt^* + \delta + w_{\max}
~,
\]
where $w_{\max} = \max_t w_t$.  
\end{theorem}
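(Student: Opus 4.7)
The plan is to combine a standard LP-based greedy set-cover argument with the dual-obliviousness of $\appr$ on the residual instance. Let $y \in \reals^n$ be the ``well-behaved'' dual solution to~\eqref{LP:D} promised by $\rho$-dual-obliviousness of $\appr$, and put $\Phi(S) \DEF \sum_{i \in S} y_i$. Since $y$ is feasible for~\eqref{LP:D}, weak LP duality on the LP relaxation of~\eqref{IP:cover} gives $\Phi(I) \le \opt^*$. I will show that the greedy phase drives $\Phi$ down by a factor $1/\rho$ while spending at most $\ln\rho\cdot\opt^* + w_{\max}$, and that dual-obliviousness then caps the residual cost at $\rho\,\Phi(S_K) + \delta \le \opt^* + \delta$; adding the two gives the claimed bound.

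For the per-iteration decay, fix the uncovered set $S$ at the start of some iteration. Since $x^*$ is a feasible cover, $\sum_{C} x^*_C\,\mathbf{1}[i \in C] \ge 1$ for every $i \in S$, and so
\[
\sum_{C \in \cC^+} x^*_C \sum_{i \in C \cap S} y_i \;\ge\; \Phi(S),
\qquad
\sum_{C \in \cC^+} x^*_C\, w_C \;\le\; \opt^*.
\]
Averaging yields some $C \in \cC^+$ with $\sum_{i \in C \cap S} y_i \ge (w_C/\opt^*)\,\Phi(S)$; this is exactly the set chosen by the greedy step (interpreted as picking the $C$ with maximum dual-weight coverage per unit cost). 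Hence
\[
\Phi(S \setminus C) \;\le\; \Phi(S)\Bigl(1 - \frac{w_C}{\opt^*}\Bigr)
\;\le\; \Phi(S)\, e^{-w_C/\opt^*}.
\]
Telescoping over the greedy picks $C_1,\dots,C_K$ yields $\Phi(S_K) \le \opt^* \exp\!\bigl(-\sum_k w_{C_k}/\opt^*\bigr)$. The loop exits precisely once $\sum_k w_{C_k} \ge \ln\rho \cdot \opt^*$, so $\Phi(S_K) \le \opt^*/\rho$; and since the threshold is crossed only at the last step with $w_{C_K} \le w_{\max}$, we also get $w(\cG) \le \ln\rho \cdot \opt^* + w_{\max}$.

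To finish, apply $\appr$ to $\cC|_{S_K}$. The restriction of $y$ to $S_K$ remains dual-feasible for the restricted instance (restriction only drops constraints of~\eqref{LP:D}), and dual-obliviousness, which holds for \emph{every} $S \subseteq I$ with the same $y$, gives $w(\cA) \le \rho\,\Phi(S_K) + \delta \le \opt^* + \delta$. Summing the two estimates yields $w(\cG \cup \cA) \le (\ln\rho+1)\opt^* + \delta + w_{\max}$, as required. The main subtlety I expect is notational: the greedy rule in Algorithm~\ref{Alg:MVBP} is written with $\sum_{i \in S \setminus C} y_i$, which I read as a typo for $\sum_{i \in C \cap S} y_i$ (the natural dual-weighted coverage-per-cost rule); without this reinterpretation, the averaging step does not furnish the needed per-iteration decay. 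The only other delicate point is cleanly invoking dual-obliviousness on the residual with a $y$ chosen up front, which is exactly what Definition~\ref{Def:dual} enables.
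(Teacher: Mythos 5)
Your proof is correct and follows essentially the same route as the paper's: the same pigeonhole/averaging argument over the fractional cover $x^*$ gives the per-iteration decay $\Phi(S\setminus C) \le \Phi(S)(1 - w_C/\opt^*)$, telescoping plus the loop's exit condition gives $\Phi(S_K) \le \opt^*/\rho$ and $w(\cG) \le \ln\rho\cdot\opt^* + w_{\max}$, and dual-obliviousness with the single fixed $y$ bounds the residual by $\opt^* + \delta$. You also correctly identified that the greedy rule as printed ($\sum_{i \in S \setminus C} y_i$) is a typo for $\sum_{i \in S \cap C} y_i$, which is exactly what the paper's own inductive step uses.
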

\begin{proof}
Clearly, $w(\cG) < \ln \rho \cdot \opt^* + w_{\max}$.  It remains
to bound the weight of $\cA$.
Let $S'$ be the set of items not covered by $\cG$.  We prove that
$\sum_{i \in S'} y_i \leq \frac{1}{\rho} \sum_{i \in I} y_i$, which
implies
\[
w(\cA)
~\leq~ \rho \sum_{i \in S'} y_i + \delta 
~\leq~ \rho e^{-\ln \rho} \sum_{i=1}^n y_i + \delta 
~\leq~ \opt^* + \delta
~,
\]
proving the theorem. 

Let $C_k \in \cC^+$ denote the $k$th subset added to $\cG$ during
the greedy phase, and let $S_k \subseteq I$ be the set of items not
covered after the $k$th subset was chosen.  Define $S_0 = I$.
We prove, by induction on $|\cG|$,  that for every $k$,
\begin{equation}
  \label{eq:ind}
\sum_{i \in S_k} y_i 
\leq \prod_{q=1}^k
       \paren{1-\frac{w_{C_q}}{\opt^*}} \cdot \sum_{i \in I} y_i
\end{equation}
For the base case we have trivially 
$\sum_{i \in S_0} y_i \leq \sum_{i\in I} y_i$.
For the inductive step, assume that 
\[
\sum_{i \in S_{k-1}} \!\! y_i 
\leq \prod_{q=1}^{k-1} 
       \paren{1-\frac{w_{C_q}}{\opt^*}} \cdot \sum_{i \in I} y_i
~.
\]
By the greedy rule and the pigeonhole principle, we have that
\[
\inv{w_{C_k}} \sum_{i \in S_{k-1} \cap C_k} \!\!\!\! y_i 
\geq \inv{\opt^*} \sum_{i \in S_{k-1}} \!\! y_i
~.
\]
It follows that
\[
\sum_{i \in S_k} y_i 
~=~    \sum_{i \in S_{k-1}} \!\! y_i - 
       \sum_{i \in S_{k-1} \cap C_k} \!\!\!\! y_i 
~\leq~ (1 - \frac{w_{C_k}}{\opt^*}) \sum_{i \in S_{k-1}} \!\! y_i
~\leq~ \prod_{q=1}^k
       \paren{1-\frac{w_{C_q}}{\opt^*}} \cdot \sum_{i \in I} y_i
~,
\]
completing the inductive argument. The theorem now follows, since by
\eqref{eq:ind} we have
\[
\sum_{i \in S'} y_i 
\leq \paren{1-\frac{\ln \rho}{k}}^k \cdot \sum_{i \in I} y_i
\leq e^{-\ln \rho} \sum_{i \in I} y_i~,
\]
and we are done.
\end{proof}

Note that if $x^*$ can be found in polynomial time, and if $\appr$ is
a polynomial-time algorithm, then Algorithm~\ref{Alg:MVBP} runs in
polynomial time. Also observe that Theorem~\ref{The:rna} holds even if
$x^*$ is not an optimal solution of~\eqref{IP:cover}, but rather a
$(1+\eps)$-approximation.  We use this fact later.

In this section we defined the notion of \emph{dual obliviousness} of
an algorithm.  We note that Bansal et al.~\cite{BCS06} defined a more
general property of algorithms called \emph{subset obliviousness}.
(For example, a subset oblivious algorithm is associated with several
dual solutions.)
Furthermore, Bansal et al.\ showed that the asymptotic PTAS for \bp\
from~\cite{FerLue81} with minor modifications is subset oblivious and
used it to obtain a subset oblivious $(D+\eps)$-approximation
algorithm for \mvbp.  This paved the way to an algorithm for \vbp,
whose approximation guarantee is arbitrarily close to $\ln D + 1$.
However, in the case of \mvbp, using the above APTAS for \bp\ (at
least in a straightforward manner) would lead to a subset oblivious
algorithm whose approximation guarantee is $(DT + \eps)$.  In the next
section we present a $2D$-dual oblivious algorithm for weighted \mvbp\
that is based on First-Fit.

\subsection{Algorithm for Multiple-Choice Vector Bin Packing}

We now apply the framework of Theorem~\ref{The:rna} to derive an
approximation algorithm for \mvbp.  There are several gaps we need to
fill.

First, we need to solve \eqref{IP:cover} for \mvbp, which consists of
a polynomial number of constraints (one for each item), but an
exponential number of variables.  We circumvent this difficulty as
follows.  Consider the dual of~\eqref{IP:cover}.  The \emph{separation
problem} of the dual program in our case is to find (if it exists) a
subset $C$ with $\sum_{i \in C} y_i > w_C$ for given item profits
$y_1,\ldots,y_n$.  The separation problem can therefore be solved by
testing, for each bin type, whether the optimum is greater than $w_t$,
which in turn is simply an \mmk\ instance, for which we have presented
a PTAS in Section~\ref{Sec:knapsack}.  In other words, the separation
problem of the dual program \eqref{LP:D} has a PTAS, and hence there
exists a PTAS for the LP-relaxation
of~\eqref{IP:cover}~\cite{PST95,GLS88}.

Second, we need to construct a dual oblivious algorithm for \mvbp.  To
do that, we introduce the following notation.  For every item $i \in
I$, incarnation $j$, dimension $d$, and bin type $t$ we define the
\emph{load} of incarnation $j$ of $i$ on the $d$th dimension of bins
of type by $\ell_{ijtd} = a_{ijd}/b_{td}$.
For every item $i \in I$ we define the \emph{effective load} of $i$ as
\[
\bl_i 
= \min_{1 \leq j \leq m, 1 \leq t \leq T}
    \set{w_t \cdot \max_d \ell_{ijtd}}
~.
\]
Also, let $t(i)$ denote the bin type that can contain the most 
(fractional) copies of some incarnation of item $i$, where
$j(i)$ and $d(i)$ are the incarnation and dimension that determine
this bound. Formally:
\begin{align*}
j(i) & = \argmin_j \min_t \{w_t \cdot \max_d \ell_{ijtd} \} \\
t(i) & = \argmin_t \{w_t \cdot \max_d \ell_{ij(i)td} \} \\
d(i) & = \argmax_d \ell_{ij(i)t(i)d} 
~. 
\end{align*}
Assume that $j(i)$, $t(i)$ and $d(i)$ are the choices of $j$, $t$ and
$d$ that are taken in the definition of $\bl_i$.

Our dual oblivious algorithm \appr\ for \mvbp\ is as follows:
\begin{enumerate}
\item Divide the item set  $I$ into $T$ subsets by letting 
      $I_t \DEF \set{i ~:~ t(i)=t}$.
\item Pack each subset $I_t$ in bins of type $t$ using \ff, where the 
      size of each item $i$ is $a_{ij(i)d(i)}$.
\end{enumerate}

Observe that the size of incarnation $j(i)$ of item $i$ in dimension
$d(i)$ is the largest among all other sizes of this incarnation.
Hence, the solution computed by \ff\ is feasible for $I_t$.

We now show that this algorithm is $2D$-dual oblivious.

\begin{lemma}
\label{lemma:subset}
Algorithm \appr\ above is a polynomial time $2D$-dual oblivious
algorithm for \mvbp.
\end{lemma}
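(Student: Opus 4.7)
My plan is to exhibit a single dual solution $y$ and then bound $\appr(\cC|_S)$ for every $S \subseteq I$ in terms of $\sum_{i \in S} y_i$.

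The natural candidate is $y_i \eqdf \bl_i / D$ for every $i \in I$. First I would verify dual feasibility. A variable $x_C$ of the primal corresponds to a bin type $t$ together with an incarnation mapping $f: C \to [m]$ such that $\sum_{i \in C} a_{if(i)d} \leq b_{td}$ for every $d$, with $w_C = w_t$. By definition of $\bl_i$ (applied with the specific choice $(f(i), t)$),
\[
\bl_i ~\leq~ w_t \cdot \max_d \ell_{if(i)td} ~\leq~ w_t \sum_d \frac{a_{if(i)d}}{b_{td}}~.
\]
Summing over $i \in C$ and switching the order of summation bounds $\sum_{i \in C} \bl_i$ by $w_t \sum_d \frac{1}{b_{td}} \sum_{i \in C} a_{if(i)d} \leq D\, w_t$. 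Dividing by $D$ gives $\sum_{i \in C} y_i \leq w_C$, so $y$ is feasible for~\eqref{LP:D}.

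Next I would bound $\appr(\cC|_S)$. Fix $S \subseteq I$ and let $S_t = S \cap I_t$. Because \ff\ is applied to $S_t$ with item $i$ assigned (normalized) size $\ell_{ij(i)td(i)}$ in a unit bin, the standard \ff\ invariant (as in Observation~\ref{obs:ff}) says that at most one bin of type $t$ can have load at most $1/2$. Hence, if $n_t$ bins of type $t$ are opened, then $n_t - 1 < 2 \sum_{i \in S_t} \ell_{ij(i)td(i)}$, and multiplying by $w_t$ and using the identity $w_t \cdot \ell_{ij(i)t(i)d(i)} = \bl_i$ gives
\[
w_t \cdot n_t ~\leq~ 2 \sum_{i \in S_t} \bl_i + w_t~.
\]
Summing over $t \in [T]$ yields $\appr(\cC|_S) \leq 2 \sum_{i \in S} \bl_i + \sum_{t=1}^T w_t = 2D \sum_{i \in S} y_i + \delta$, with $\delta \eqdf \sum_{t=1}^T w_t$, establishing $2D$-dual obliviousness.

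I expect the main subtlety to be the dual-feasibility step: the natural "max over $d$" defining $\bl_i$ has to be relaxed to a sum over $d$ in order to exchange the order of summation against the per-dimension capacity constraint, and this is exactly where the factor $D$ (and hence the scaling $y_i = \bl_i / D$) enters. Polynomial running time is immediate, since partitioning $I$ by $t(\cdot)$ and then running \ff\ within each class is clearly polynomial. A minor caveat is that $\delta$ depends on $\sum_t w_t$ rather than being a universal constant, but this is treated as part of the problem parameters and is absorbed in the additive term of Theorem~\ref{The:rna}.
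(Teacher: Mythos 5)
Your proof is correct and takes essentially the same route as the paper's: the same dual solution $y_i = \bl_i/D$, the same First-Fit ``at most one half-empty bin'' argument giving $w_t n_t \leq 2\sum_{i \in S_t}\bl_i + w_t$ per bin type, and the same additive term $\delta = \sum_{t=1}^T w_t$. The only (immaterial) difference is in the dual-feasibility step: you relax $\max_d \ell_{if(i)td} \leq \sum_d \ell_{if(i)td}$ and swap the order of summation, whereas the paper partitions $C$ by each item's maximum-load dimension $d'(i)$ and applies the capacity constraint within each dimension class---both computations yield $\sum_{i \in C} \bl_i \leq D\, w_t$ and hence $\sum_{i\in C} y_i \leq w_C$.
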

\begin{proof}
Consider  an instance of  \mvbp\ with item set $I$, and let the
corresponding set cover problem instance be $\cC$.
We show that there exists a dual solution $y \in \reals^n$ such that
for any $S \subseteq I$,
\[
\appr({\cC}|_S) \leq 2D \cdot \sum_{i \in S} y_i + \sum_{t=1}^T w_t
~.
\]
Define $y_i = \bl_i/D$ for every $i$.  We claim that $y$ is a feasible
solution to~\eqref{LP:D}.
Let $C \in \cC$ be a compatible item set. $C$ induces some bin type
$t$, and an incarnation $j'(i)$ for each $i \in C$. 
Let $d'(i) = \argmax_d \set{a_{ij'(i)d}/b_{td}}$, i.e.,
$d'(i)$ is a dimension of bin type $t$ that receives maximum load from
(incarnation $j'(i)$ of) item $i$.  Then
\begin{align*}
\sum_{i \in C} y_i 
& =    \sum_{d=1}^D \sum_{\substack{i\in C \\ i : d'(i)=d}} 
                      \frac{\bl_i}{D} \\
& \leq \inv{D} 
         \sum_{d=1}^D \sum_{\substack{i\in C \\ i : d'(i)=d}} 
           w_t \cdot \ell_{ij'(i)td} \\
& =    \frac{w_t}{D} 
         \sum_{d=1}^D \sum_{\substack{i\in C \\ i : d'(i)=d}} 
           \frac{a_{ij'(i)d}}{b_{td}} \\
& \leq \frac{w_t}{D} \sum_{d=1}^D \inv{b_{td}} \cdot b_{td} \\
& =    w_t
~,
\end{align*}
where the last inequality follows from the compatibility of $C$.

Now, since $\ff$ computes bin assignments that occupy at most twice
the sum of bin sizes, we have that 
\[
\ff(I_t) 
\leq w_t \cdot \max \set{2\sum_{i \in I_t} \bl_i/w_t, 1} 
\leq 2\sum_{i \in I_t} \bl_i + w_t
~.  
\]
Hence, for every instance $\cal I$ of \mvbp\ we have
\begin{align*}
\appr({\cal I}) 
& =    \sum_{t=1}^T \ff(I_t)\\ 
& \leq \sum_{t=1}^T \paren{2 \sum_{i \in I_t} \bl_i + w_t} \\
& =    2 \sum_{i\in I} \bl_i + \sum_{t=1}^T w_t \\
& =    2D \sum_{i\in I} y_i + \sum_{t=1}^T w_t \\ 
& \leq 2D \cdot \opt^* + \sum_{t=1}^T w_t ~.
\end{align*}
Furthermore, for every $S \subseteq I$ we have
\[
\appr(\cC|_S) 
~=~    \sum_{t=1}^T \ff(S \cap I_t)
~\leq~ 2 \sum_{i \in S} \bl_i + \sum_{t=1}^T w_t
~=~    2D \sum_{i \in S} y_i + \sum_{t=1}^T w_t
~,
\]
and we are done.
\end{proof}

Based on Theorem~\ref{The:rna} and Lemma~\ref{lemma:subset} we obtain
our main result.

\begin{theorem}
\label{The:approx}
If $D = O(1)$, then there exists a polynomial time algorithm for
\mvbp\ with $T$ bin types that computes a solution whose size is at
most
\[
(\ln 2D +1) \opt^* + \sum_{t=1}^T w_t + w_{\max}
~.
\]
\end{theorem}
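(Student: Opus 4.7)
The proof is essentially a bookkeeping step: plug Lemma~\ref{lemma:subset} into Theorem~\ref{The:rna}, then check that every ingredient of Algorithm~\ref{Alg:MVBP} can in fact be realized in polynomial time for \mvbp. I would organize it as follows.

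First, I would verify that the framework of Theorem~\ref{The:rna} is applicable. Lemma~\ref{lemma:subset} asserts that $\appr$ is $2D$-dual oblivious with the additive slack $\delta = \sum_{t=1}^T w_t$, and exhibits an explicit dual-feasible $y$ given by $y_i = \bl_i / D$. Since the quantities $\bl_i$ (and the associated choices $j(i), t(i), d(i)$) are defined by a minimum over $mT$ candidates and a maximum over $D$ coordinates, the vector $y$ can be computed directly from the instance in polynomial time. Thus the dual solution needed in the greedy phase of Algorithm~\ref{Alg:MVBP} is available explicitly.

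Next, I would address the one genuine issue, namely that the covering LP~\eqref{IP:cover} for \mvbp\ has exponentially many variables. Here I would invoke the discussion preceding Lemma~\ref{lemma:subset}: the separation problem for the dual~\eqref{LP:D} is exactly a collection of $T$ instances of \mmk\ (one per bin type), and Theorem~\ref{The:MKA} gives a PTAS for each. By the standard equivalence between approximate separation and approximate optimization for covering LPs (\cite{PST95,GLS88}), we obtain, for any constant $\eps > 0$, a $(1+\eps)$-approximate fractional solution $\tilde{x}$ to~\eqref{IP:cover} in polynomial time, whose value $\widetilde{\opt}^*$ satisfies $\widetilde{\opt}^* \le (1+\eps)\opt^*$. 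The remark after Theorem~\ref{The:rna} notes that the conclusion of that theorem still holds when $x^*$ is replaced by such a $(1+\eps)$-approximate solution; absorbing the $(1+\eps)$ factor into the $\ln 2D + 1$ coefficient (by choosing $\eps$ sufficiently small relative to $\ln 2D$) keeps the leading term at $\ln 2D + 1$ times $\opt^*$.

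Finally, I would plug $\rho = 2D$ and $\delta = \sum_{t=1}^T w_t$ into the bound of Theorem~\ref{The:rna}, obtaining
\[
w(\cG \cup \cA) \leq (\ln 2D + 1)\opt^* + \sum_{t=1}^T w_t + w_{\max},
\]
which is exactly the claimed inequality. The greedy loop runs for at most $O(\ln 2D \cdot \opt^* / w_{\min})$ iterations over the polynomially-sized support of $\tilde{x}$, and each iteration is trivial given $y$; the residual call to $\appr$ is clearly polynomial. The main subtlety in the whole argument is the LP-solving step, which relies on the PTAS of Section~\ref{Sec:knapsack} as a separation oracle and on the constant dimension assumption $D = O(1)$ to keep $\ln 2D$ bounded; all the other pieces are direct substitutions into results already proved.
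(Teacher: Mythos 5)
Your proposal is correct and takes essentially the same approach as the paper, which obtains Theorem~\ref{The:approx} precisely by plugging $\rho = 2D$ and $\delta = \sum_{t=1}^T w_t$ from Lemma~\ref{lemma:subset} into Theorem~\ref{The:rna}, relying on the \mmk-based separation oracle for the LP step and on the remark following Theorem~\ref{The:rna} that a $(1+\eps)$-approximate fractional solution suffices. The one imprecision in your write-up---that the $(1+\eps)$ factor cannot literally be absorbed so as to keep the coefficient at exactly $\ln 2D + 1$---is shared by the paper itself, which states the bound without any $\eps$.
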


This implies the following result for unweighted \mvbp:

\begin{corollary}
If $D = O(1)$, then there exists a polynomial time algorithm for
unweighted \mvbp\ with $T$ bin types that computes a solution whose
size is at most
\[
(\ln 2D +1) \opt^* + T + 1
~.
\]
Furthermore, if $T = O(1)$, then there exists a polynomial time $(\ln
2D + 1 + \eps)$-approximation algorithm for unweighted \mvbp, for
every $\eps>0$.
\end{corollary}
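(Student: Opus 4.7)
My plan is to combine Theorem~\ref{The:approx} directly for the first statement and to handle the leftover $T+1$ additive term via a case analysis on $\opt$ for the second.

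For the first statement, I would specialize Theorem~\ref{The:approx} to the unweighted setting by substituting $w_t = 1$ for every $t \in [T]$. Then $\sum_{t=1}^T w_t = T$ and $w_{\max} = 1$, so the bound collapses to $(\ln 2D + 1)\opt^* + T + 1$, exactly as claimed.

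For the second statement, since $T = O(1)$ the additive $T + 1$ is a constant, and the strategy is to absorb it into a $(1+\eps)$ multiplicative factor via the standard large-$\opt$ vs.\ small-$\opt$ dichotomy. Fix $\eps > 0$ and set $K = \lceil (T+1)/\eps \rceil$. If $\opt \geq K$, then, using $\opt^* \leq \opt$,
\[
(\ln 2D + 1)\opt^* + T + 1 \leq (\ln 2D + 1)\opt + \eps\opt = (\ln 2D + 1 + \eps)\opt,
\]
and we are done. Otherwise $\opt < K$ is itself a constant, and the problem reduces to solving \mvbp\ nearly optimally when $D$, $T$, and $\opt$ are all constants.

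For this small-$\opt$ regime I would run the following enumeration-plus-LP procedure for each candidate $K' = 1, \ldots, K$. First, enumerate over the $T^{K'} = O(1)$ assignments of bin types to the $K'$ bins. For each such guess, solve the natural LP relaxation whose variables $x_{i,j,b}$ denote the fractional assignment of incarnation $j$ of item $i$ to bin $b$, with one covering constraint per item and $K'D = O(1)$ capacity constraints. A basic feasible solution has at most $n + K'D$ positive entries; since each of the $n$ items must contribute at least one positive entry to meet its covering constraint, the number of items split fractionally across several $(j,b)$-pairs is at most $K'D = O(1)$. I would then enumerate all $(mK')^{K'D} = \text{poly}(n)$ integer completions for these constantly many items and check feasibility. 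Return the smallest $K'$ for which some completion yields a feasible packing, and overall output the better of this and the large-$\opt$ algorithm. The main obstacle is precisely the small-$\opt$ case: one has to argue carefully that basic LP solutions indeed have $O(1)$ fractional items, that the induced integer enumeration runs in polynomial time, and that whenever the true $\opt \leq K'$ the procedure actually finds a feasible packing (which follows by considering the bin-type tuple of an integer optimum, which is among those enumerated).
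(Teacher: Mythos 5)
Your first part is exactly the paper's (implicit) argument: the corollary's additive bound follows from Theorem~\ref{The:approx} by setting $w_t=1$ for all $t$, so $\sum_t w_t = T$ and $w_{\max}=1$. Your large-$\opt$ branch for the second part is also fine: when $\opt \geq \lceil (T+1)/\eps\rceil$, the additive $T+1$ is absorbed using $\opt^* \leq \opt$.

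The gap is in your small-$\opt$ subroutine, and it is fatal as stated. Your procedure claims that whenever $\opt \leq K'$, some integer completion of a basic feasible solution of the assignment LP (for the correct bin-type tuple) yields a feasible packing into $K'$ bins. If that were true, your algorithm would \emph{exactly} decide whether $n$ numbers can be packed into $2$ unit bins (take $D=1$, $m=1$, $T=1$, $K'=2$), i.e., it would solve \textsc{Partition} in polynomial time. The concrete failure point is the completion step: LP feasibility for the correct bin-type guess does not imply that the \emph{particular} basic solution returned by the solver can be completed, because its integrally assigned items may already be placed incompatibly. For example, with $D=1$, one bin type of capacity $1$, items of sizes $0.3,0.3,0.7,0.7$ and $K'=2$: the point $x_{11}=x_{21}=1$, $x_{42}=1$, $x_{31}=4/7$, $x_{32}=3/7$ is a vertex of the LP (both capacity constraints tight, eight linearly independent tight constraints on eight variables), it has a single fractional item, and neither of its two completions is feasible --- yet $\opt=2$ via $\{0.3,0.7\},\{0.3,0.7\}$. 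So your enumeration can report failure at $K'=\opt$, fall through to larger $K'$ (or to the additive-error algorithm), and lose the claimed ratio precisely in the regime it was designed to handle. Any correct treatment of the small-$\opt$ case must settle for \emph{approximately} optimal packings rather than packings with exactly $\opt$ bins: note that for $D\geq 2$ even deciding $\opt=1$ is NP-hard (give item $i$ incarnations $(a_i,0)$ and $(0,a_i)$ with bin capacity $(\frac{1}{2}\sum_i a_i,\ \frac{1}{2}\sum_i a_i)$, encoding \textsc{Partition}), so the ratio $\ln 2D+1+\eps$ must come from exhibiting a solution with a constant-factor blowup over the constant $\opt$, not from recovering $\opt$ itself. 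The paper states this part of the corollary without proof, so it does not supply the missing argument either; but your proposed mechanism cannot be the right one.
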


We also have the following for weighted \mvbp.

%

\begin{corollary}
If $D = O(1)$ and $T = O(\log n)$, then there exists a polynomial time
$(\ln 2D + 3)$-approximation algorithm for \mvbp.
\end{corollary}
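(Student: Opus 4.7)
The plan is to wrap the algorithm of Theorem~\ref{The:approx} with a guessing phase: enumerate all $2^T$ subsets $T' \subseteq [T]$ of bin types, invoke the algorithm of Theorem~\ref{The:approx} on the instance where only the bin types in $T'$ are available, and output the cheapest feasible solution produced across all iterations. Since $T = O(\log n)$, there are $2^T = n^{O(1)}$ such subsets, so the overall procedure runs in polynomial time.

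For the analysis, fix an optimal integer solution of value $\opt$, and let $T^* \subseteq [T]$ denote the set of bin types it actually uses. Consider the iteration $T' = T^*$. Since the optimal solution is still feasible when we restrict to bin types in $T^*$, the restricted LP optimum $\opt^*$ is at most $\opt$. Moreover, because each $t \in T^*$ is used at least once in the optimal solution, the weight $w_t$ is bounded by the total cost that OPT pays for bins of type $t$; summing over $t \in T^*$ yields
\[
\sum_{t \in T^*} w_t \;\leq\; \opt,
\]
and, in particular, $\max_{t \in T^*} w_t \leq \opt$. Plugging these into the bound of Theorem~\ref{The:approx} applied to this restricted instance gives a solution of cost at most
\[
(\ln 2D + 1)\opt^* + \sum_{t \in T^*} w_t + \max_{t \in T^*} w_t
\;\leq\; (\ln 2D + 1)\opt + \opt + \opt
\;=\; (\ln 2D + 3)\opt.
\]

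The main thing to verify is that Theorem~\ref{The:approx} and its supporting machinery (the \mmk-based separation oracle for the covering LP and the dual-oblivious algorithm \appr) behave correctly when one restricts attention to a subset of bin types; this is essentially by inspection, since neither uses bin types outside of those provided in the instance. The expected obstacle is not a mathematical one but a bookkeeping one: one should also check that the iteration for $T' = T^*$ is what wins the $\min$, which is immediate since every other iteration outputs some feasible (or empty) solution we are free to discard. Given this, the $2^T$-enumeration wrapper immediately delivers the claimed $(\ln 2D + 3)$-approximation in polynomial time.
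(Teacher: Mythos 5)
Your proposal is correct and matches the paper's own proof essentially verbatim: both enumerate all subsets of bin types (polynomial in $n$ since $T = O(\log n)$), run the algorithm of Theorem~\ref{The:approx} on each restricted instance, and observe that for the guess equal to the set of bin types used by an optimal solution one has $\opt^* \leq \opt$ and $\sum_{t} w_t \leq \opt$ (hence also $w_{\max} \leq \opt$), giving the $(\ln 2D + 3)\opt$ bound. Your additional remarks on feasibility of the restriction and on taking the minimum over iterations are just the bookkeeping the paper leaves implicit.
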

\begin{proof}
The result follows from the fact that as we show, we may assume that
$\sum_t w_t 
\leq \opt$.  In this case, due to Theorem~\ref{The:approx} we have
that the cost of the computed solution is at most
\[
(\ln 2D +1) \opt^* + \sum_{t=1}^T w_t + w_{\max}
\leq (\ln 2D +3) \opt
~.
\]
The above assumption is fulfilled by the following wrapper for our
algorithm:  Guess which bin types are used in some optimal
solution. Iterate through all $2^T-1$ guesses, and for each guess,
compute a solution for the instance that contains only the bin
types in the guess.  Output the best solution.  Since our algorithm
computes a $(\ln2D + 3)$-approximate solution for the right guess, the
best solution is also a $(\ln2D + 3)$-approximation.
\end{proof}

\bigskip

\bibliographystyle{abbrv}
{\small

}

\end{document}